\documentclass[aps,pra,twocolumn,reprint]{revtex4-1}
\usepackage{graphicx}
\usepackage{dcolumn}
\usepackage{bm}
\usepackage{epsfig} 	
\usepackage{epsf}
\usepackage{amsmath}
\usepackage{amssymb}
\usepackage{amsfonts}
\usepackage{amscd}
\usepackage{amsthm}

\usepackage{wasysym}
\usepackage{listings}
\usepackage{cases}
\usepackage{url}
\usepackage{times}

\usepackage[utf8]{inputenc}
\usepackage{color}



%

\definecolor{gris}{gray}{0.60}
\definecolor{rouge}{rgb}{1,0,0}
\definecolor{jaune}{rgb}{1,1,0}
\newcommand{\be}{\begin{equation}}
\newcommand{\ee}{\end{equation}}
\newcommand{\bfg}{\begin{figure}}
\newcommand{\efg}{\end{figure}}
\newcommand{\bra}[1]{\langle#1|}
\newcommand{\ket}[1]{|#1\rangle}

\newcommand{\ketbra}[2]{|#1\rangle\langle#2|}

\newcommand{\C}{\mathbb{C}}
\newcommand{\Z}{\mathbb{Z}}

\newcommand{\gra}{{\boldsymbol \alpha}}
\newcommand{\grb}{{\boldsymbol \beta}}
\newcommand{\grg}{{\boldsymbol \gamma}}
\newcommand{\grz}{{\boldsymbol 0}}
\newcommand{\gri}{{\boldsymbol i}}
\newcommand{\grj}{{\boldsymbol j}}
\newcommand{\grk}{{\boldsymbol k}}

\newcommand{\ie}{{\em i.e. }}
\newcommand{\floor}[1]{\lfloor #1 \rfloor}
\newcommand{\vl}{\vec{\lambda}}

\newtheorem{myprop}{Proposition}
\newenvironment{mypropbis}[1]
  {%
   \addtocounter{myprop}{-1}%
   \begin{myprop}}
  {\end{myprop}}
\newtheorem{mydef}{Definition}

\newtheorem{mytheorem}{Theorem}

\begin{document}
\sloppy
\title{Exploring pure quantum states with maximally mixed reductions
}
\author{Ludovic Arnaud}
\email{Electronic address: larnaud@ulb.ac.be}
\author{Nicolas J. Cerf}
\affiliation{QuIC, Ecole Polytechnique de Bruxelles, CP 165/59, Universit\'e Libre de Bruxelles, 1050 Brussels, Belgium.}
\begin{abstract}
We investigate multipartite entanglement for composite quantum systems in a pure state. Using the generalized Bloch representation for $n$-qubit states, we express the condition that all $k$-qubit reductions of the whole system are maximally mixed, reflecting maximum bipartite entanglement across all $k$~vs.~$n-k$ bipartitions. As a special case, we examine the class of balanced pure states, which are constructed from a subset of the Pauli group  $\mathcal{P}_n$ that is isomorphic to $\Z_2^n$. This makes a connection with the theory of quantum error-correcting codes and provides bounds on the largest allowed $k$ for fixed $n$. In particular, the ratio $k/n$ can be lower and upper bounded in the asymptotic regime, implying that there must exist multipartite entangled states with at least $k=\floor{0.189 \, n}$ when $n\to \infty$.  We also analyze symmetric states as another natural class of states with high multipartite entanglement and prove that, surprisingly, they cannot have all maximally mixed $k$-qubit reductions with $k>1$. Thus, measured through bipartite entanglement across all bipartitions, symmetric states cannot exhibit large entanglement. However, we show that the permutation symmetry only constrains some components of the generalized Bloch vector, so that very specific patterns in this vector may be allowed even though $k>1$ is forbidden. This is illustrated numerically for a few symmetric states that maximize geometric entanglement, revealing some interesting structures.

\end{abstract}
\pacs{03.67.-a, 03.67.Mn}
\maketitle


\section{Introduction}
Quantum entanglement is certainly one of the most fascinating concepts arising in quantum mechanics, essentially because it appears as a contradiction to reductionism, {\it i.e.}, the principle by which understanding a complex system reduces to the description of each of its individual constituents \cite{Schroedinger1935}. As a matter of fact, a quantum composite system can possibly be in a state such that its parts are more disordered -- have a higher entropy -- than the whole system. This peculiar property, known as the non-monotonicity of the von Neumann entropy \cite{Wehrl1978}, is tightly linked to the notion of {\em bipartite entanglement}. A pure bipartite entangled state, for example, admits a zero entropy, which translates the fact that one has complete knowledge about the joint system via its wavefunction. Its two parts, however, are mixed, so that they exhibit a nonzero entropy. In other words, one knows less about the parts than about the system taken as a whole, a property which cannot be conceived in classical terms.

The essence of bipartite entanglement is thus that the information about a quantum bipartite system is not only encoded in its parts, but also in the correlations between them. Remarkably, when a bipartite quantum system is {\em maximally entangled}, the information appears to be fully encoded in these correlations and no longer in the system's constituents. Mathematically speaking, while the whole system is described as a pure state, its parts are individually described as maximally mixed states (with a density matrix proportional to the identity). A paradigmatic example of such a situation is the Einstein-Poldolsky-Rosen (EPR) state of two qubits \cite{EPR,Bohm}, which is a pure bipartite state whose parts are maximally mixed: each qubit has an entropy of 1 bit, so its state is completely unknown, while the 2-qubit joint state is perfectly determined. Equivalently, one observes that the entropy of one part conditionally on the other is negative (it is $- 1$ bit), which is a sufficient condition for bipartite entanglement and can be associated with a flow of (virtual) information backwards in time \cite{Cerf1997}.

A very intriguing question is of course whether similar situations may exist if the system is made out of more than two parts. The underlying concept of {\em multipartite entanglement} becomes naturally much richer than bipartite entanglement, but also generally much more difficult to understand (see, e.g., \cite{RMP-Horodecki2009} for a review on entanglement). It leads to stronger contradictions with local realism than bipartite entanglement\cite{Greenberger1990}, as well as to the existence of several inequivalent classes of entangled states even in the simplest case of three qubits \cite{Dur2000}. Multipartite entanglement is also crucial to applications, such as one-way quantum computing \cite{Raussendorf2001}, and its dynamics when exposed to a dissipative environment has revealed a surprisingly large variety of flavors \cite{Aolita2008,Barreiro2010}. 

Among the possible approaches to multipartite entanglement, one of them consists of probing the presence of bipartite entanglement over all inequivalent {\it bipartitions} of all sizes \cite{Facchi2008}. Roughly speaking, the idea is to measure how much each subset of $k$ out of $n$ constituents (with $0<k \le \lfloor n/2 \rfloor$) can be bipartite entangled with its $n-k$ complementary constituents, knowing that there is a subtle balance with the bipartite entanglement exhibited by all other possible subsets with respect to their complements. This leads to the concept of a {\em genuine} multipartite entangled state, that is, a $n$-partite pure state such that none of its $k$-partite subsets can be represented by a pure state (all subsets are mixed, hence bipartite entangled with their complements). 

One may even be more specific and seek for a {\em strong form} of a genuine multipartite entangled state. This would be a composite system in a pure state such that all of its individual constituents are {\em maximally} mixed, all pairs of its constituents are {\em maximally} mixed, all triplets of its constituents are {\em maximally} mixed, and so on up to all $k$-tuples of its constituents. This property, namely the fact of admitting {\em maximally mixed reductions}, is an ideal case of genuine multipartite entanglement. We expect that the constraint of having an overall pure state will set a limit on the possibility of having maximally mixed $k$-tuples for large $k$. Consider, for example, a system of $n$ qubits in an overall pure state. It is tempting to search for $n$-qubit pure states having the property that all subsets of $k$ qubits are maximally mixed up to size $k= \lfloor n/2 \rfloor$. It has long been known, however, that except for a very few low-dimensional cases, such as the EPR state for $n=2$, it is impossible to find a $n$-qubit pure state exhibiting this strong form of genuine multipartite entanglement \cite{Scott2004}.

In this paper, we examine this fundamental question and investigate multipartite entanglement under two perspectives. First, we focus on the possible existence of $n$-qubit states that satisfy this property of admitting maximally-mixed $k$-partite reductions for arbitrary $n$ and $k$. We give examples of known states exhibiting this property, then provide a set of conditions that such states must satisfy as well as asymptotic existence bounds. The second part is centered on a weaker version of this maximally-mixed reduction property, which is motivated by recent results on {\em symmetric states}, a special class of states that are known to exhibit a high multipartite entanglement as measured in terms of their geometric entanglement \cite{Martin2010,Aulbach2010,Markham2011}.

In Section II, we expose the representation of $n$-qubit pure states in terms of a generalized Bloch vector, which is very convenient in order to express the conditions that all $k$-partite reductions are maximally mixed. This leads us to consider, in Sec. III, a class of $n$-partite pure states that we name ``balanced". In the Bloch representation, they correspond to a Bloch vector with all components equal to a same value for indices belonging to some subset of the Pauli group (the other components being all taken equal to zero). This makes a connection with the theory of quantum error correction, from which we obtain lower and upper bounds on the highest allowed value of $k$ for a given value of $n$. In particular, we show that there exist $n$-qubit states admitting all maximally-mixed $k$-partite reductions with at least $k=\floor{0.189 \, n}$ when $n\to \infty$. 

In Section IV, we then consider another natural class of states, namely symmetric states, among which it is known that some states with genuine multipartite entanglement can be found. We prove that, surprisingly, the symmetric states cannot have maximally-mixed reductions of size $k$ that exceed 1, regardless of $n$. In that sense, they are very far from the {\em strong} form of genuine multipartite entanglement that we seek. On the other hand, we show that the permutation symmetry underlying symmetric states only puts constraints on the components of the Bloch vector with an even index, so that the components with an odd index may possibly be taken equal to zero. This brings us to investigate $n$-partite symmetric pure states whose Bloch vector has many vanishing odd-index components. This investigation is carried out numerically, focusing on some symmetric states that maximize geometric entanglement as found in \cite{Aulbach2010}. We show that some of these states are close to having maximally-mixed $k$-partite reductions for large values of $k$ (although, strictly speaking, $k=1$), so that they approach the {\em strong form} of genuine multipartite entanglement. Finally, some conclusions are drawn in Sec. V.

\section{Maximally-mixed reduction property}

The property of admitting maximally-mixed reductions is encapsulated by the following definition:
\begin{mydef}
An $n$-qubit pure state $\ket{\psi}$ is a $k$-MM state if all its reductions of size $k$ are maximally mixed. Here and in what follows, MM stands for maximally mixed.
\end{mydef}
Note that according to the definition, a $k$-MM state is also a $(k-\ell)$-MM state for $0\le \ell \le k$, and, in particular, every pure state is a $0$-MM state. A natural question which arises here is whether a $k$-MM state exists for a given couple $(n,k)$. Note that no more than half of the qubits can be in a maximal mixed state, as a consequence of the Hilbert-Schmidt decomposition of the overall pure state. So, it is clear that $k$-MM states cannot exist when $k>\floor{n/2}$. Here, we list the known facts about the existence of $k$-MM states for small values of $n$:

\subsection{$k$-MM states of small size $n$}
\begin{itemize}
\item For $n=2$, it is easy to check that the four Bell states
\begin{equation}
\ket{\Phi^{\pm}}=\frac{\ket{00} \pm \ket{11}}{\sqrt{2}}, \qquad \ket{\Psi^{\pm}}=\frac{\ket{01} \pm \ket{10}}{\sqrt{2}},
\end{equation}
are 1-MM states.
\item For $n=3$, the Greenberger-Horne-Zeilinger (GHZ) state is a 1-MM state, while W state \cite{Dur2000} is not. In general, for any size $n$, the generalized GHZ state
\begin{equation}
\ket{GHZ}=\frac{\ket{00\cdots 0}+\ket{11\cdots 1}}{\sqrt{2}},
\end{equation}
is a 1-MM state.
\item For $n=4$, there exists no 2-MM state. To see this, consider the following 4-qubit states:
\begin{eqnarray}\label{m4}
\ket{L}&=&\frac{1}{2\sqrt{3}}\left( (1-\omega)(\ket{0011} + \ket{1100}) +\omega^2 (\ket{0101} + \ket{0110} \right.\nonumber\\
&&\left. + \ket{1001} + \ket{1010}-\ket{0000}-\ket{1111})\right),\\
\ket{HS}&=&\frac{1}{\sqrt{6}}\left( \ket{0011} + \ket{1100} +\omega(\ket{0101} + \ket{1010})\right.\nonumber\\
&&\left. + \omega^2(\ket{0110} + \ket{1001})\right),\quad \omega=e^{2\pi i/3},
\end{eqnarray}
All of their 1-qubit reductions are maximally mixed, so they are 1-MM state, but this is not true for their 2-qubit reductions. $\ket{HS}$ was introduced in \cite{Higuchi2000} and conjectured to be the 4-qubit maximally entangled state. Then, it was shown to be a local maximum of the averaged 2-qubit von Neumann entropy in \cite{Brierley2007}. Finally, in \cite{Gour2010}, it was shown that the global maximum for all averaged 2-qubit Tsallis and R\'enyi entropies is reached by $\ket{HS}$ for $\alpha<2$, and by the states $\ket{L}$ for $\alpha>2$. 
This implies that no 4-qubit state can have all of its 2-qubit reductions maximally mixed, so indeed no 2-MM state of 4 qubits exists.
\item For $n=5$, the two logical states $\ket{0}_L$ and $\ket{1}_L$ of the 5-qubit code introduced in \cite{laflamme1996} are both 2-MM state. It is easy to check that every qubits (or every pair of qubits) is found in a maximally-mixed state after tracing over the remaining qubits.
\item For $n=6$, the four 6-qubit states constructed as logical Bell state using the previous 5-qubit code states,
\begin{eqnarray}\label{m6}
\ket{M_6^{\phi^{\pm}}}=\frac{\ket{0}\ket{0}_L \pm \ket{1}\ket{1}_L}{\sqrt{2}}, \nonumber \\
\ket{M_6^{\psi^{\pm}}}=\frac{\ket{0}\ket{1}_L \pm \ket{1}\ket{0}_L}{\sqrt{2}}, 
\end{eqnarray}
are 3-MM states.
\item For $n>7$, it is shown in \cite{Scott2004} that no state can have all of its $\floor{n/2}$ reduction maximally mixed. Note that the case $n=7$ is not solved yet.
\end{itemize}
We see that the 2-qubit 1-MM state and 6-qubit 3-MM state appear to be very special cases, with maximally mixed reductions up to precisely half the number of qubits (the trivial  upper bound on $k$ set by the Hilbert-Schmidt decomposition). In general, for an arbitrary number $n$ of qubits, we can expect that it will be possible to find a $k$-MM state up to some threshold value $k_{\mathrm{max}}<\floor{n/2}$, which only depends on $n$.

\subsection{Generalized Bloch vector formalism}

To analyze this question, we need first to introduce the generalized Pauli matrices, which are the set of matrices constructed in terms of all $n$-fold tensor products of the form
\begin{equation}\label{gpb}
 \sigma_\gra = \sigma_{\alpha_1}\otimes\sigma_{\alpha_2}\otimes...\otimes\sigma_{\alpha_n},
\end{equation}
where each $\sigma_{\alpha_i}$ represents respectively the $2\times 2$ identity matrix or one of the usual Pauli matrices, depending on the index $\alpha_i\in\{0,1,2,3\}$. The bold index $\gra$ refers to a vector index containing the $n$ indices $\alpha_i$. There exist $4^n$ such matrices, all being traceless except for $\sigma_\grz$ which corresponds to the $2^n\times 2^n$ identity matrix and admits a trace $\mathrm{Tr}(\sigma_\grz)=2^n$. Using this set of generalized Pauli matrices $\{\sigma_\gra\}$, it is also possible to construct the bigger set of the form $\{\sigma_\gra,-\sigma_\gra,i\sigma_\gra,-i\sigma_\gra\}$. This set of $4^{n+1}$ elements becomes closed under matrix multiplication and forms the so-called Pauli group $\mathcal{P}_n$.


The set $\{\sigma_\gra\}$ also forms a basis of a complex Schmidt-Hilbert space of dimension $4^n$, so that every complex  square $2^n\times2^n$ matrix can be seen as a vector ${\bf r}$ in this space.  For instance, a matrix $\rho$ reads
\begin{equation}\label{rho}
\rho=\sum_{\gra}r_\gra\,\sigma_\gra \equiv {\bf r},
\end{equation}
while the components $r_\gra$ are given by the inverse formula
\begin{equation}\label{ralpha}
r_\gra=\frac{1}{2^{n}}\mathrm{Tr}(\sigma_\gra\,\rho).
\end{equation}
which, for a pure state, becomes simply
\begin{equation}\label{ralphapure}
r_\gra=\frac{1}{2^{n}}\bra{\psi}\sigma_\gra\ket{\psi}.
\end{equation}
If $\rho$ is a quantum state, then hermiticity ($\rho=\rho^{\dagger}$), positivity ($\rho\ge0$), and normalization ($\mathrm{Tr}\, \rho=1$) give the three following constraints on the components $r_\gra$ \cite{Bengtsson2006}:

\begin{equation}
\begin{cases}
r_\gra \in \mathbb{R},\forall\, \gra,\nonumber\\
{\bf r} \textrm{ is in the positive cone} \nonumber\\ 
r_\grz = \frac{1}{2^{n}}.\nonumber
\end{cases}
\end{equation}
These three relations mean that, after translation by $-1/2^n$ in the zeroth direction, a quantum state is completely represented by the vector ${\bf r}$, which lives in the positive cone contained in a real subspace of a Schmidt-Hilbert space, the so-called {\em generalized Bloch vector} of dimension $4^n-1$. Note that the concept of a positive cone embraces the idea that every convex combination of positive operators is also a positive operator.

Pure states, {\it i.e.}, rank-one projectors that satisfy $\rho^2=\rho$, appear in this representation as vectors ${\bf r}$ that are constrained by 
\begin{eqnarray}
\sum_{\gra\grb}g_{\gra\grb\grg}\,r_\gra r_\grb=r_\grg,\label{pure}
\end{eqnarray}
where $g_{\gra\grb\grg}$ are the structure constants of $SU(2^n)$ defined as $\sigma_\gra\sigma_\grb:=\sum_{\grg}g_{\gra\grb\grg}\,\sigma_{\grg}$. Note that Eq.~\eqref{pure} also automatically implies positivity. It can be decomposed in two independent relations 
\begin{numcases}{}
|\vec{r}|^2\equiv\sum_\gri r_\gri^2=\frac{2^n-1}{2^{n+1}}=R^2,\label{norm}\\
(\vec{r}\star\vec{r})_\gri\equiv\sum_{\grj\grk}g_{\gri(\grj\grk)}r_\grj r_\grk=\frac{2^n-2}{2^n}r_\gri,\label{orientation}
\end{numcases}
where latin indices correspond to vector indices excluding the zeroth component ($\gra\equiv(\grz,\gri)$), $\star$ is by definition the generalization of the cross product, and parentheses stand for symmetrization, $g_{\gri(\grj\grk)}=(g_{\gri\grj\grk}+g_{\gri\grk\grj})/2$. Relations \eqref{norm} and \eqref{orientation} express respectively that for a state being pure, its Bloch vector ${\bf r}$ should live on a sphere of radius $R$ (which is actually the boundary of the positive cone) and should have a specific orientation. For more details on the generalized Bloch representation, see \cite{Bengtsson2006}.

\subsection{Conditions for maximally mixed reductions}

The generalized Bloch representation is a very useful tool in order to address the maximally-mixed reduction property.
Let $\rho$ be the density matrix of an $n$-qubit pure state $\ket{\psi}$ living in the tensor product space $\mathcal{H}=\C^2\otimes\C^2\otimes...\otimes\C^2$, and consider the bipartition $\mathcal{H}=\mathcal{H}_A\otimes\mathcal{H}_B$ where $A$ and $B$ are defined as the sets of the first $k$ qubits and last $n-k$ qubits, respectively. In the generalized Bloch representation, the $k$-qubit reduced density matrix $\rho_{A}$ resulting from tracing out the qubits of B is given by
\begin{eqnarray}
\rho_{A} &=& \mathrm{Tr}_B(\rho) = \mathrm{Tr}_B(\sum_{\gra}r_\gra\,\sigma_\gra)\nonumber\\
&=& \sum_{\gra_{A}\gra_{B}}r_{\gra_{A}\gra_{B}}\,\sigma_{\gra_A}\,\mathrm{Tr}_B(\sigma_{\gra_B}),
\end{eqnarray}
where the index $\gra$ has been decomposed according to the bipartion $\gra_A\gra_B$. By using the fact that the matrices $\sigma_{\gra_B}$ are all traceless excepted for the one corresponding to the identity on $\mathcal{H}_B$, noted $\sigma_{\grz_B}$, we get
\begin{eqnarray}
\rho_A &=& \sum_{\gra_{A}}r_{\gra_{A}\grz_{B}}\,\sigma_{\gra_A}\,\mathrm{Tr}_B(\sigma_{\grz_B})\nonumber\\
&=& 2^{n-k}\sum_{\gra_{A}}\,r_{\gra_{A}\grz_{B}}\,\sigma_{\gra_A},\label{rhoA}
\end{eqnarray}
where $r_{\gra_{A}\grz_{B}}\equiv r_{(\alpha_1\alpha_2\cdots\alpha_k00\cdots0)}$ with $n-k$ zeros in the vector index at the positions corresponding to $B$. If we now consider an arbitrary bipartition $(A,B)$, we obtain an expression similar to \eqref{rhoA} where in the vector index of $r_{\gra_{A}\grz_{B}}$, the zeros are located at the positions of the traced out qubits. For instance, for a 6-qubit state, if the first, third, and last qubits are traced out, the component $r_{\gra_{A}\grz_{B}}$ corresponds to $r_{(0 \alpha_2 0 \alpha_4 \alpha_5 0)}$.

Equation \eqref{rhoA} gives a very nice operational procedure for performing the partial trace. Usually, when expressing the density matrix in the computational basis, the entropy resulting from tracing out a part of the system manifests itself both as a loss of some components and as the mixing of some other components of the original density matrix. In the generalized Bloch representation, the partial trace appears just as the loss of some components (see Fig.~\ref{trace}).
\begin{figure}
\begin{center}
\epsfig{file=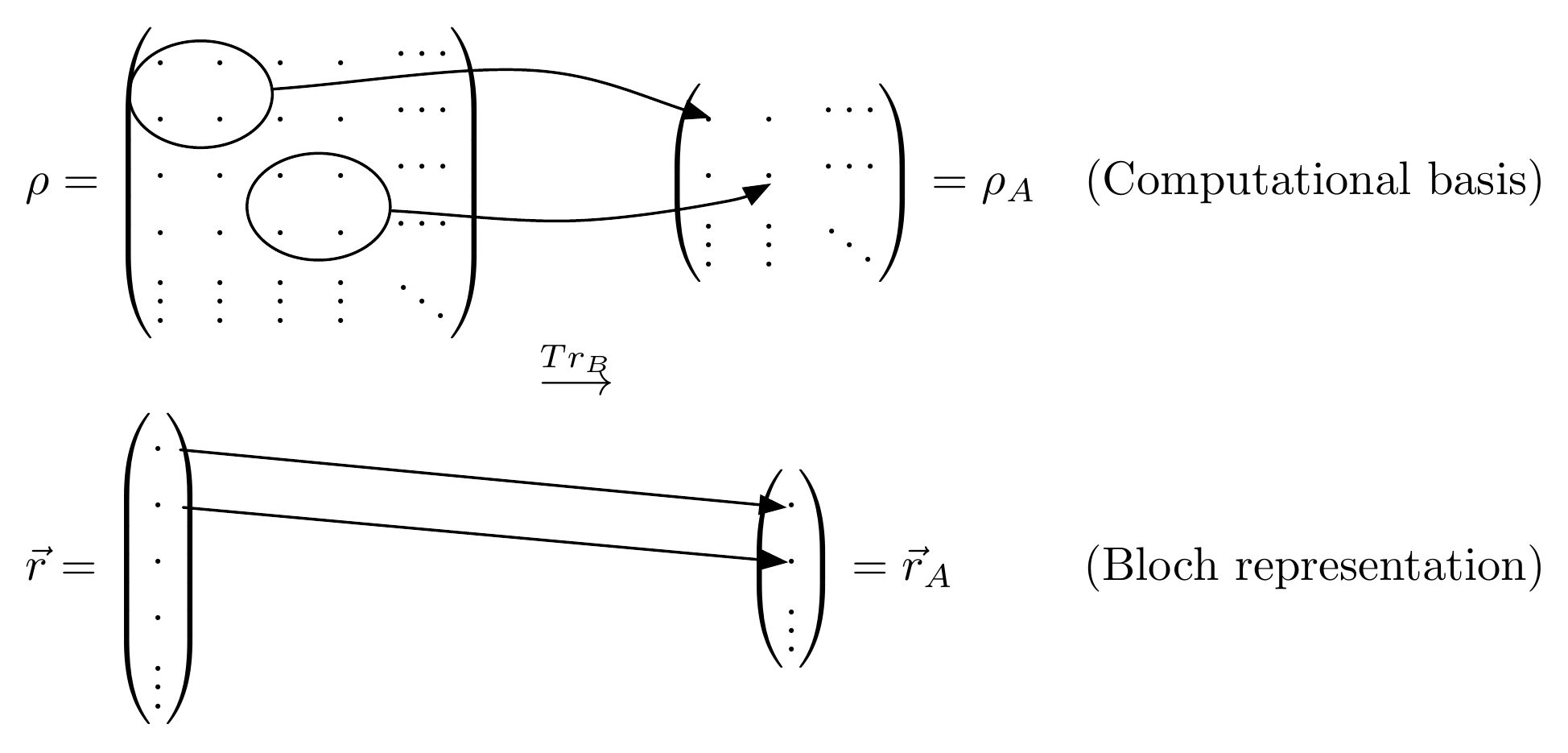,width=8.8cm,angle=0}
\caption{The partial trace seen as a loss of information, both in the computational basis and in the generalized Bloch representation.}\label{trace}  
\end{center}       
\end{figure}
For instance, with the Bloch representation, the linear entropy of the reduced state simply reads
\begin{equation}
S_l(\rho_A)=1-\mathrm{Tr}_A(\rho_A^2)=1-\left(2^{2n-k}\sum_{\gra_A}r_{\gra_A\grz_B}^2\right).
\end{equation} 

Our concern now is the special case where all reduced density matrices $\rho_A$ for a given size $|A|=k$ are proportional to the identity on $\mathcal{H}_A$, {\it i.e.}, when the original state is a $k$-MM state. In view of Eq.~\eqref{rhoA}, this is the case if all components $r_\alpha$ with a vector index $\gra$ containing at least $n-k$ zeros vanish, except for the $r_\grz$ component which is always equal to $1/2^n$. By enumerating all these indices, we see that there are 
\begin{eqnarray}
D_k = \sum_{l=1}^k \binom{n}{l} 3^l \label{dim}
\end{eqnarray}
such components that must vanish. Let us define the {\em weight} $\omega(\gra)$ of an index $\gra$ (or by extension of a component $r_\gra$ or of a generalized Pauli matrix $\sigma_\gra$) by its number of non-zero subindices. Then, we can establish the following proposition:
\begin{myprop}\label{propkmm}
The $n$-qubit state $\rho$ is a $k$-MM state if and only if its corresponding Bloch vectors ${\bf r}$ does not have any component $r_{\gra}$ with an index weight in the range $0 < \omega(\gra) \le k$.
\end{myprop}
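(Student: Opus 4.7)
The proof is essentially a direct unpacking of Eq.~\eqref{rhoA} combined with the linear independence of the generalized Pauli matrices $\sigma_{\gra_A}$, so I would organize it as a short equivalence argument in two directions.

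The plan is to fix an arbitrary bipartition $(A,B)$ with $|A|=k$ and interpret Eq.~\eqref{rhoA} as a generalized Bloch expansion of $\rho_A$ on $\mathcal{H}_A$: its components on the basis $\{\sigma_{\gra_A}\}$ are, up to the factor $2^{n-k}$, precisely the numbers $r_{\gra_A\grz_B}$, where $\grz_B$ denotes zero entries at all positions belonging to $B$. Since $\{\sigma_{\gra_A}\}$ is a basis, $\rho_A$ is proportional to the identity on $\mathcal{H}_A$ if and only if every coefficient $r_{\gra_A\grz_B}$ with $\gra_A\neq\grz_A$ vanishes (the $\gra_A=\grz_A$ coefficient being fixed to $1/2^n$ by normalization, which reproduces $\rho_A=\Itwo^{\otimes k}/2^k$ via the factor $2^{n-k}$).

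From this observation the forward direction is immediate: if $\rho$ is $k$-MM, then for every $k$-subset $A$ all coefficients $r_{\gra_A\grz_B}$ with $\gra_A\neq\grz_A$ vanish. Given any multi-index $\gra$ with $0<\omega(\gra)\le k$, I choose $A$ to be any $k$-element subset of positions containing the support of $\gra$; then $\gra$ takes the form $\gra_A\grz_B$ with $\gra_A\neq\grz_A$, so $r_\gra=0$. Conversely, suppose all $r_\gra$ with $0<\omega(\gra)\le k$ vanish. For any $k$-subset $A$, every index of the form $\gra_A\grz_B$ has weight at most $k$, so the only surviving term in Eq.~\eqref{rhoA} is $\gra_A=\grz_A$, yielding $\rho_A=2^{n-k}\,(1/2^n)\,\sigma_{\grz_A}=\Itwo^{\otimes k}/2^k$, which holds for every $A$ of size $k$; hence $\rho$ is $k$-MM.

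There is no genuine obstacle here — the only thing to be careful about is the matching between ``indices of weight at most $k$'' and ``indices of the form $\gra_A\grz_B$ for some $k$-subset $A$'', which is just the combinatorial observation that a weight-$j$ index with $j\le k$ can always be embedded into a $k$-subset by adjoining $k-j$ arbitrary positions. I would state this once and invoke it in both directions, keeping the whole proof to roughly a paragraph.
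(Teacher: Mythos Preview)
Your proposal is correct and follows essentially the same approach as the paper: the paper does not give a separate formal proof of the proposition but simply observes, from Eq.~\eqref{rhoA}, that $\rho_A$ is maximally mixed iff the components $r_{\gra_A\grz_B}$ with $\gra_A\neq\grz_A$ vanish, and then notes that requiring this for all $k$-subsets $A$ amounts to killing every $r_\gra$ with index containing at least $n-k$ zeros, i.e., $0<\omega(\gra)\le k$. Your write-up is just a more explicit two-direction version of that same argument, with the combinatorial embedding step spelled out.
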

This means that the vector ${\bf r}$ (after translation by $-1/2^n$ in the zeroth direction) does not have any non-zero component in the subspace spanned by the basis vectors $\sigma_\gra$ with index weight lower than or equal to $k$. If denote this subspace $\mathcal{E}_k$ (of dimension $D_k$) and $\bar{\mathcal{E}_{k}}$ its orthogonal-complementary subspace, we can establish the equivalent proposition:
\begin{mypropbis}{propkmm}
The $n$-qubit state $\rho$ is a $k$-MM state if and only if its corresponding Bloch vectors ${\bf r}$ has zero components in $\mathcal{E}_k$, so that its support belongs to $\bar{\mathcal{E}_{k}}$.
\end{mypropbis}

The existence of a $k$-MM state results from the compatibility between having a pure state satisfying Eq.~\eqref{pure} and Prop. \ref{propkmm} at the same time. Such a compatibility is not easy to study for an arbitrary couple ($n$,$k$) without more information about the state. For this reason, we focus in the next Section on a class of balanced states that is suitable for analyzing this question and obtaining existence bounds.

\section{Existence bounds for $k$-MM states}

In the Bloch representation, constructing a pure state directly in terms of the components of  its Bloch vector involves the orientation relation \eqref{orientation}, which is difficult to manipulate in general. Even checking the purity of a given state numerically implies $\mathcal{O}(4^{3n})$ operations, and beyond 10 qubits the computation time becomes unreasonable on a standard desktop computer. Instead, we will focus on a restricted class of states, which we call {\em balanced states}.

\subsection{Balanced pure states as $k$-MM states}

Balanced state are defined in the Bloch representation as states whose non-zero components $r_\gra$ have all the same value, namely the same value as the identity component $r_\grz$.
\begin{mydef}
A $n$-qubit balanced state is expressed as
\begin{equation}
\rho_S=\frac{1}{2^n}\sum_{\sigma \in S}\sigma,\label{ring}
\end{equation}
where $S$ is a subset of the Pauli group $\mathcal{P}_n$, which entirely defines the state.
\end{mydef}
Note that in this definition, the hermiticity of $\rho_S$ implies that $S$ does not contain {\em complex} elements of the Pauli group of the form $\pm i\,\sigma$. We can now express the following theorem about the purity of such balanced states:
\begin{mytheorem}\label{theo-balanced}
The $n$-qubit balanced state $\rho_S$ defined from the set $S$ is pure if and only if $S$ forms a group under matrix multiplication that is isomorphic to  $\Z_2^n$.
\end{mytheorem}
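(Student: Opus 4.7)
The plan is to prove both implications by exploiting the Hilbert--Schmidt orthogonality $\mathrm{Tr}(\sigma_\gra\sigma_\grb) = 2^n\delta_{\gra\grb}$, together with the following key observation tying the balanced condition to stabilizer theory: because the non-zero Bloch components of $\rho_S$ all equal $+1/2^n$, each $\sigma\in S$ satisfies $\bra{\psi}\sigma\ket{\psi}=1$ whenever $\rho_S=\ketbra{\psi}{\psi}$ is pure, and since each $\sigma\in S$ is a Hermitian generalized Pauli matrix with eigenvalues $\pm 1$, this forces $\sigma\ket{\psi}=\ket{\psi}$. Thus every element of $S$ stabilizes $\ket{\psi}$, which already indicates that $S$ must sit inside the stabilizer subgroup of $\ket{\psi}$ in $\mathcal{P}_n$.

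For the easy direction ($\Leftarrow$), I would assume $S$ is a subgroup of $\mathcal{P}_n$ with $S\cong\Z_2^n$, so that $|S|=2^n$, $S$ is abelian, and every element squares to $\mathbb{1}$. For each fixed $s\in S$ the left-translation $t\mapsto st$ is a bijection of $S$, so
\begin{equation}
\rho_S^2 = \frac{1}{4^n}\sum_{s,t\in S} st = \frac{|S|}{4^n}\sum_{u\in S} u = \frac{2^n}{4^n}\,(2^n\rho_S) = \rho_S.
\end{equation}
Combined with $\mathrm{Tr}(\rho_S)=1$ (only $\sigma_\grz\in S$ contributes to the trace), this shows $\rho_S$ is a rank-one projector, hence pure.

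For the nontrivial direction ($\Rightarrow$), I would first exploit purity in the form $\mathrm{Tr}(\rho_S^2)=1$; Hilbert--Schmidt orthogonality makes all cross-terms vanish, because $S$ contains no pair $\pm\sigma_\gra$, yielding $|S|/2^n=1$ and thus $|S|=2^n$. Next, for any $\sigma_\gra,\sigma_\grb\in S$, the preliminary observation gives $(\sigma_\gra\sigma_\grb)\ket{\psi}=\ket{\psi}$. Writing $\sigma_\gra\sigma_\grb=c\,\sigma_\grg$ with $c\in\{\pm 1,\pm i\}$, one rules out $c=\pm i$ (an anti-Hermitian operator cannot stabilize a normalized vector) and $c=-1$ (which would force $r_\grg=-1/2^n$, forbidden by the balanced assumption), leaving $c=+1$ and $\sigma_\grg\in S$. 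This is closure; combined with $\sigma_\grz\in S$ and the involution $\sigma_\gra^{-1}=\sigma_\gra$, it makes $S$ a subgroup of $\mathcal{P}_n$. Abelianness follows because if $\sigma_\gra$ and $\sigma_\grb$ anticommuted then the two identities $\sigma_\gra\sigma_\grb\ket{\psi}=\ket{\psi}$ and $\sigma_\grb\sigma_\gra\ket{\psi}=\ket{\psi}$ would jointly imply $\ket{\psi}=-\ket{\psi}$; finally, every element squaring to $\mathbb{1}$ together with $|S|=2^n$ forces $S\cong\Z_2^n$.

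The main obstacle will be ruling out the case $c=-1$ in the product $\sigma_\gra\sigma_\grb=c\,\sigma_\grg$: it is what distinguishes \emph{balanced} stabilizer states from generic ones, and it is precisely where the positivity convention $r_\gra\geq 0$ built into the balanced definition is decisive. I would take care to invoke it explicitly, since dropping it would only yield the more general stabilizer-state characterization in which $S$ is any abelian subgroup of $\mathcal{P}_n$ not containing $-\mathbb{1}$.
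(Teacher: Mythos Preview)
Your proof is correct and takes a somewhat different route from the paper's. The paper argues the forward direction by writing $\rho_S=\rho_S^2$ as an equality of Pauli expansions and invoking uniqueness of the expansion to conclude (rather tersely) that $S$ must be closed under multiplication; only afterwards does it read off $|S|=2^n$ from the simplified double sum, and then deduce abelianness and the involution property from $-\sigma_{\grz}\notin S$ and $\pm i\,\sigma\notin S$. You invert this order: you obtain $|S|=2^n$ first, directly from $\mathrm{Tr}(\rho_S^2)=1$ and Hilbert--Schmidt orthogonality, and then establish closure through the stabilizer observation that every $\sigma\in S$ fixes $\ket{\psi}$, carefully ruling out the phases $\pm i$ and $-1$ in products $\sigma_\gra\sigma_\grb=c\,\sigma_\grg$. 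Your approach has two merits: it makes the role of the sign convention built into the balanced definition (all non-zero $r_\gra=+1/2^n$) completely explicit---it is exactly what excludes $c=-1$---and it links the result directly to the stabilizer formalism, which is precisely where the paper heads in the next subsection. The paper's uniqueness-of-expansion step, by contrast, leaves implicit the possibility that products $\sigma_\gra\sigma_\grb$ carry signs that might cancel in the double sum; your argument handles this point transparently. Both proofs conclude with the same appeal to the structure theorem for finite abelian groups of exponent two.
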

\begin{proof}
If $\rho_S$ is a pure state ($\rho_S=\rho_S^2$) then we have
\begin{equation}
\frac{1}{2^n}\sum_{\sigma \in S}\sigma = \frac{1}{2^{2n}}\sum_{\sigma , \tau \in S}\sigma\tau.\label{purering}\\
\end{equation}
The uniqueness of the expansion \eqref{ring} implies that \eqref{purering} is satisfied only when $S$ is closed under matrix multiplication, {\it i.e.}, when $S$ is a subgroup of $\mathcal{P}_n$. It follows that
\begin{eqnarray}
\frac{1}{2^n}\sum_{\sigma \in S}\sigma &=& \frac{1}{2^{2n}}\sum_{\tau\in S}\sum_{\sigma\in S}\sigma \nonumber\\
 &=& \frac{|S|}{2^{2n}}\sum_{\sigma\in S}\sigma,\label{purering2}
\end{eqnarray}
which implies that for $\rho_S$ being pure, the order of $S$ should be $2^n$. In summary, $S$ must fulfill the three following properties:
\begin{itemize}
\item $S$ has a finite order equal to $2^n$;
\item $S$ is abelian because normalization and hermiticity of $\rho_S$ imply $-\sigma_\grz\notin S$ and $\pm i\,\sigma\notin S$, respectively;
\item All the elements of $S$ have order 2, {\it i.e.}, they are such that $\sigma^2=\sigma_0$, since $\pm i\,\sigma\notin S$.
\end{itemize}
According to the fundamental theorem of finite abelian groups \cite{BookGroup}, this implies that $S\simeq\Z_2\times\Z_2\times...\times\Z_2=\Z_2^n$.
\end{proof}
A simple example of such a pure balanced state is the state $\ket{0}\equiv \ket{00\cdots 0}$. Its components in the Bloch representation are given by Eq.~\eqref{ralpha} as 
\begin{equation}
r_\gra=\frac{1}{2^{n}}\mathrm{Tr}(\sigma_\gra\,\ketbra{0}{0})=\frac{1}{2^{n}}\bra{0}\sigma_\gra\ket{0}=\frac{1}{2^{n}}(\sigma_\gra)_{00},\nonumber\\
\end{equation}
which are non-zero if and only if $\sigma_\gra$ is a tensor product of identity and $\sigma_Z$ matrices. There are $2^n$ such matrices $\sigma_\gra$, which form the set $S$, and thus the state can be written explicitly as
\begin{eqnarray}
\ketbra{0}{0}&=&\frac{1}{2^n}\left(\sigma_{(00\cdots00)}+\sigma_{(00\cdots03)}+\sigma_{(00\cdots30)}+\cdots\right.\nonumber\\
 && \left.\cdots + \sigma_{(33\cdots30)}+\sigma_{(33\cdots33)}\right),\label{zero}
\end{eqnarray}
where it is clear that $S\simeq\Z_2^n$ just by relabelling the identity matrix $\sigma_0$ as ``0'' and the Pauli matrix $\sigma_Z=\sigma_3$ as ``1'',
for instance $\sigma_{(00\cdots03)}\rightarrow(00\cdots01)$.

According to Prop.~1, a pure balanced state will be a $k$-MM state if its group $S$ does not contain elements with an index weight lower than or equal to $k$ (except for weight zero). This problem is fully equivalent to finding an additive self-orthogonal quantum error-correcting code over GF(4) \cite{Calderbank1998}. More generally, the connection between entanglement and quantum error correcting codes (QECC) was noted by several authors, and it was proven for example in \cite{Scott2004} that a QECC that can detect $k$ errors is also a $k$-MM state. The reciprocal of this statement can easily be understood by interpreting each Pauli matrix as an error operation. Indeed, for every tensor product of Pauli matrices with index weight $0 < \omega(\gra) \le k$, Eq.~\eqref{ralphapure} gives
\begin{equation}
\bra{\psi}\sigma_\gra\ket{\psi}=0,
\end{equation}
for a $k$-MM state $\ket{\psi}$, which means that each error $\sigma_\gra$ is detectable because it rotates the state in an orthogonal subspace. 

\subsection{Quantum Gilbert-Varshamov and quantum Hamming bounds on $k$-MM states}

By exploiting this relationship, known bounds in the context of QECC can be mapped onto existence bounds for $k$-MM states.
The quantum Gilbert-Varshamov (GV) and quantum Hamming (H) bounds introduced in \cite{Ekert1996} give, respectively, lower and upper bounds on the number of errors that a quantum code can detect, for a given number of qubits. They can be directly translated in our context and give 
\begin{equation}
\underset{(H)}{D_{\floor{k/2}} +1} \le 2^n \le \underset{(GV)}{D_{k} +1}.
\end{equation}
in terms of the subspace dimension $D_k$.
Asymptotically, these bounds can be written as \cite{Ekert1996}
\begin{equation}
\underset{(GV)}{f\left(\frac{k}{n}\right)} \le 0 \le \underset{(H)}{f\left(\frac{k}{2n}\right)},
\end{equation}
where 
\begin{equation}
f(x) = 1-x\,\log_2 3+x\,\log_2x+(1-x)\,\log_2(1-x),
\end{equation}
is a decreasing function which has a root in $x_0\simeq0.18929$. Physically, this means that it is always possible to find a $n$-qubit $k$-MM state ($n,k\to \infty$) that is such that, by keeping less than 19\% or more than 81\% of its qubits, we completely lose the information on the initial pure state. On the other hand, it is impossible to find such a state if we keep between 38\% and 62\% of its qubits. The situation in the region between 19\% and 38\% (or between 62\% and 81\%) is unknown, see Fig.~2. An intriguing physical implication of these bounds is that the entropy behaves as an extensive quantity (it is proportional to the number of qubits) in any subsystem as long as it has a size lower than 19\% of the total system. It is only beyond this bound that, at some point, we observe a defect of extensitivity which originates from the purity of the state of the total system.

Note that there exist more accurate bounds on QECC (see for instance \cite{Feng2004}), but these are useless in our case. Indeed, finding a $k$-MM state is equivalent to finding a one-codeword's code, and typically these other bounds become stronger only when the number of codewords exceeds 1. In our case, there are no known better bounds than the quantum Hamming and quantum Gilbert-Varshamov bounds.

\subsection{Numerical bounds on $k$-MM states}

Constructive upper and lower bounds can also be obtained numerically. For instance, the search of additive self-orthogonal quantum codes over GF(4) based on linear programing has been performed up to around 100 qubits. Such results are plotted in Fig.~2, based on Markus Grassl's database \cite{Grassl:codetables}.
\begin{figure}[h]
\begin{center}
\includegraphics[width=8cm,angle=0]{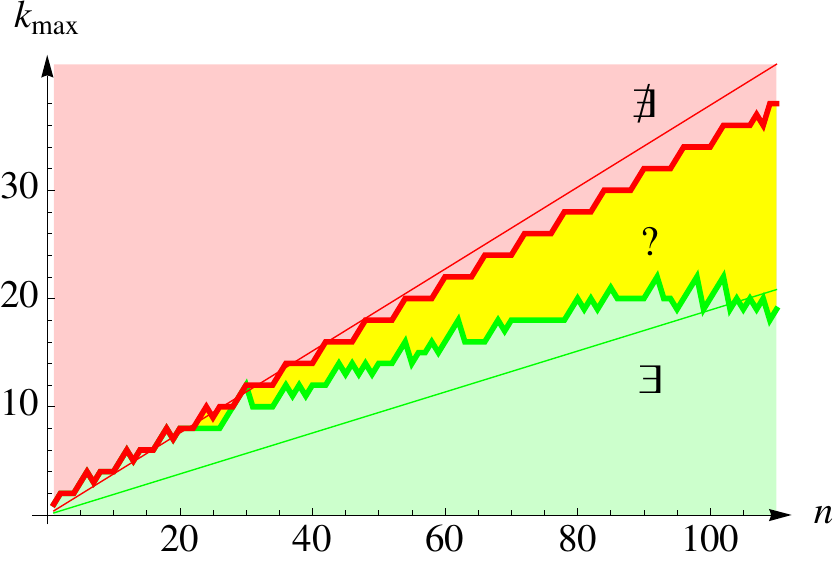}
\caption{(Color online) Constructive upper (thick red line) and lower (thick green line) bounds for the existence of $k$-MM states based on Markus Grassl's database \cite{Grassl:codetables}. The asymptotic limits on the domain of existence are also shown, namely the quantum Hamming bound (thin red line) and quantum Gilbert-Varshamov bound (thin green line).}\label{bounds}  
\end{center}       
\end{figure}

\section{Symmetric $k$-MM states and beyond}

\subsection{Symmetric states}
In \cite{Brown2005}, it is conjectured that any $n$-qubit state of maximal multipartite entanglement should be a $1$-MM state, when the entanglement is measured through (the sum of) the negativity over all inequivalent bipartitions. We may naively extend this conjecture by saying that any $n$-qubit state of maximal multipartite entanglement should be a $k$-MM state, with $k$ being the maximum allowed value as analyzed in the previous Section. While this is very well possible, we will see that the {\em symmetric states}, as defined below, are not the good candidates to test this conjecture. To understand why, we first recall that a symmetric state is a $n$-qubit pure state that is invariant under any permutations of its qubits, that is
\begin{equation}\label{sym}
U_{\pi}\ket{\psi}=\ket{\psi},\forall\, \pi\in S_n,
\end{equation}
where $U_{\pi}$ is the unitary transformation that effects the permutation $\pi$ over the set of qubits and $S_n$ is the symmetric group of $n$ objects. These symmetric states form a $(n+1)$-dimensional symmetric subspace of $\mathcal{H}$, which is often considered as a good subspace to look for genuine multipartite entangled states, particularly in terms of their geometric entanglement \cite{Martin2010,Aulbach2010,Markham2011}. Surprisingly, we observe that they satisfy the following theorem:
\begin{mytheorem}\label{theo-symmetric}
Symmetric $n$-qubit states are at most $1$-MM states.
\end{mytheorem}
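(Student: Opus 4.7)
The plan is to prove the contrapositive: any $n$-qubit permutation-symmetric pure state fails to be $2$-MM. Since a $k$-MM state with $k\ge 2$ is automatically $2$-MM (by the definition of $k$-MM), this forces every symmetric state to be at most $1$-MM; the bound is tight because the GHZ state discussed above is both symmetric and $1$-MM.

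The core observation is purely dimensional: the symmetric (triplet) subspace of $\mathcal{H}_1\otimes\mathcal{H}_2$ has dimension $3$, whereas the maximally mixed two-qubit state $\Id/4$ has rank $4$. I would establish that the two-qubit reduction $\rho_{12}$ of a symmetric $\ket{\psi}$ is supported entirely in this triplet subspace, and hence cannot equal $\Id/4$. For this, let $\ket{s}=(\ket{01}-\ket{10})/\sqrt{2}$ be the antisymmetric singlet on qubits $1$ and $2$. Using only the single transposition invariance $S_{12}\ket{\psi}=\ket{\psi}$, which is a weak consequence of the full permutation symmetry Eq.~\eqref{sym}, together with $S_{12}\ket{s}=-\ket{s}$, one sees that the overlap $\bra{\psi}(\ket{s}\otimes\ket{\phi})$ must equal its own negative and therefore vanish, for every choice of state $\ket{\phi}$ on the remaining $n-2$ qubits. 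Expanding $\bra{s}\rho_{12}\ket{s}$ as a sum over any orthonormal basis $\{\ket{\phi}\}$ of the $(n-2)$-qubit complement then gives $\bra{s}\rho_{12}\ket{s}=0$, so $\mathrm{rank}(\rho_{12})\le 3$ and $\rho_{12}\ne\Id/4$.

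Since every pair of qubits is interchangeable under the symmetry, the same conclusion applies to every two-qubit reduction, ruling out $k$-MM for any $k\ge 2$. I expect no genuine obstacle: the argument reduces to an elementary swap/antisymmetry identity plus a dimension count, and no use of the purity relation Eq.~\eqref{pure} or of the full structure of the Bloch vector is needed. The only mild subtlety is the bookkeeping connection between the reduced-state formulation used above and Prop.~\ref{propkmm}, which phrases $k$-MM as the vanishing of all Bloch components of weight at most $k$; the two statements are equivalent by Eq.~\eqref{rhoA}, so one can switch back to the Bloch-vector language at the end if desired.
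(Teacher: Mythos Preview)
Your argument is correct. Both proofs rest on the same underlying fact---that a permutation-symmetric $\ket{\psi}$ satisfies $\langle U_{ij}\rangle=1$ for every transposition---but they extract the contradiction with $2$-MM in different ways. The paper expands the SWAP operator in the Pauli basis,
\[
U_{ij}=\tfrac{1}{2}\Bigl(\sigma_{\grz}+\sum_{a=1}^{3}\sigma_{aa}^{ij}\Bigr),
\]
so that $\langle U_{ij}\rangle=1$ becomes the Bloch-vector identity $\sum_{a=1}^{3} r_{aa}^{ij}=1/2^{n}$, which directly violates Prop.~\ref{propkmm} for $k\ge 2$. Your route instead notes that $U_{ij}=\Id-2\ketbra{s}{s}$ on qubits $i,j$, so $\langle U_{ij}\rangle=1$ forces $\bra{s}\rho_{ij}\ket{s}=0$ and hence $\mathrm{rank}(\rho_{ij})\le 3<4$. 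This is more elementary and self-contained: it bypasses the generalized Bloch machinery entirely and needs only a dimension count. The paper's formulation, on the other hand, pinpoints \emph{which} weight-two components are obstructed and thereby sets up the natural generalization to higher even weights (Theorem~\ref{theo-pair}), which your rank argument does not immediately suggest.
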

\begin{proof}
We first express the unitary transformation that effects the transposition between qubits $i$ and $j$, {\it i.e.}, the {\em SWAP} operator. In terms of generalized Pauli matrices, this operator reads 
\begin{eqnarray}
U_{ij}&=&\frac{1}{2}\left(\sigma_{\grz}+\sigma_{11}^{ij}+\sigma_{22}^{ij}+\sigma_{33}^{ij}\right)\nonumber\\
&=&\frac{1}{2}\left(\sigma_{\grz}+\sum_{a=1}^3\sigma_{aa}^{ij}\right).\label{swap}
\end{eqnarray}
In this expression we use the more usual notation for generalized Pauli matrices, \ie the positions of non-identity matrices are indicated as superscripts while the positions of identity matrices are implicit, and the global identity is still written $\sigma_\grz$. For instance, $\sigma_{11}^{12}$ corresponds to the action of $\sigma_x$ on the first and second qubits. Now, we can write its expectation value in a given pure state $\ket{\psi}$ as 
\begin{eqnarray}
\langle U_{ij}\rangle&=&\frac{1}{2}\left(\bra{\psi}\sigma_{\grz}\ket{\psi}+\sum_{a=1}^3\bra{\psi}\sigma_{aa}^{ij}\ket{\psi}\right)\nonumber\\
&=&\frac{1}{2}\left(1+2^n\sum_{a=1}^3 r_{aa}^{ij}\right).\label{tij}
\end{eqnarray}
Because the symmetric group $S_n$ can be generated by the set of all transpositions of two elements, the set of relations \eqref{sym} is fully equivalent to the set of relations
\begin{eqnarray}\label{casm1}
U_{ij}\ket{\psi}&=&\ket{\psi}\Longleftrightarrow \sum_{a=1}^3 r_{aa}^{ij}=\frac{1}{2^n}, ~ \forall \,i>j \in [1,n],\label{cont1}
\end{eqnarray}
which provide conditions on some components of the generalized Bloch vector ${\bf r}$ of a symmetric state $\ket{\psi}$.
At the same time, a $k$-MM state with $k>1$ should have all components $r_{ab}^{ij}=0$ with $i\ne j$, according to Prop.~1, which necessarily contradicts some of the relations \eqref{cont1}.
\end{proof}
Note that in the context of QECC, Theorem~\ref{theo-symmetric} means that it is not possible to find a code in the symmetric subspace detecting more than one error (or correcting any error). More physically, we see that the permutation symmetry creates some frustration, which prevents the subsets of 2 qubits (or more) to be all maximally mixed. A natural question is of course whether permutation symmetry also manifests itself by constraining {\em higher} index weight terms to be non-zero. To answer this question, let us define the concept of {\em parity} of an index $\gra$ (and by extension of a matrix $\sigma_{\gra}$ or of a component $r_{\gra}$):
\begin{mydef}\label{even-odd}
Let index $\gra$ contain $\lambda_1$ subindices 1, $\lambda_2$ subindices 2, and $\lambda_3$ subindices 3. The parity of $\gra$ is defined as even if the $\lambda_i$'s are all even. Otherwise,  it is defined as odd.
\end{mydef}
For instance the indices $(011)$ and $(1122)$ are even, but $(122)$, $(0123)$ and $(1123)$ are odd. Note that any component $r_\gra$ with an odd index weight $\omega(\gra)$ is necessarily an odd component. We can now state that
\begin{mytheorem}\label{theo-pair}
For a symmetric state, the even components $r_{\gra}$ of a given index weight $w(\gra)$ cannot all vanish.
\end{mytheorem}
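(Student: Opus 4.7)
The plan is to extract the theorem from the single identity $\langle U_\pi\rangle = 1$, applied to a well-chosen family of permutations. For each $j = 0, 1, \ldots, \floor{n/2}$, let $\pi_j$ be a product of $j$ disjoint transpositions, e.g.\ $(1,2)(3,4)\cdots(2j-1,2j)$, so that by \eqref{swap}
\begin{equation}
U_{\pi_j} = \prod_{\ell=1}^j \frac{1}{2}\Bigl(\sigma_\grz + \sum_{a=1}^3 \sigma_{aa}^{(2\ell-1)(2\ell)}\Bigr).
\end{equation}
Distributing the product gives a sum, over subsets $S \subseteq \{1,\ldots,j\}$ and assignments $(a_\ell)_{\ell \in S} \in \{1,2,3\}^S$, of generalized Pauli matrices of weight exactly $2|S|$. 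The key structural observation is that each such term has its non-identity subindices appearing in equal pairs, so the multiplicities $\lambda_1,\lambda_2,\lambda_3$ are all even: every Pauli term in the expansion of $U_{\pi_j}$ lies in the even-parity sector.

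Next, I take expectation values in the symmetric state. Using \eqref{ralphapure} together with the fact that permutation invariance forces $r_\gra$ to depend only on the multiset of subindices of $\gra$, the relation $\langle U_{\pi_j}\rangle = 1$ collapses, after grouping terms by $m = |S|$ and by $(\nu_1,\nu_2,\nu_3)=(\lambda_1/2,\lambda_2/2,\lambda_3/2)$, into $\sum_{m=0}^{j} \binom{j}{m}\, t_m = 2^j$, where $t_0 = 1$ and, for $m\ge 1$,
\begin{equation}
t_m = 2^n \sum_{\nu_1+\nu_2+\nu_3=m} \binom{m}{\nu_1,\nu_2,\nu_3}\, r^{(2m)}_{\nu_1\nu_2\nu_3},
\end{equation}
$r^{(2m)}_{\nu_1\nu_2\nu_3}$ denoting the common value of any even component of weight $2m$ with multiplicities $\lambda_a = 2\nu_a$. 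Binomial inversion of this triangular system, valid for all $j\le\floor{n/2}$, yields $t_m=\sum_{\ell=0}^m (-1)^{m-\ell}\binom{m}{\ell}\,2^\ell=(2-1)^m=1$ for every such $m$. Since $t_m\neq 0$, not every even component of weight $2m$ entering the sum defining $t_m$ can vanish; odd-weight cases are vacuous because even parity requires even weight, so the claim holds for all admissible $w\le n$.

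The main obstacle, I expect, is the bookkeeping in the second step: verifying that permutation invariance really does collapse the many Pauli-basis terms in $U_{\pi_j}$ into averages over parity classes, and that the multinomial and binomial coefficients line up so that the inversion is unambiguous. The conceptual content is short: SWAPs, and hence disjoint products of SWAPs, live entirely in the even-parity sector of the Pauli algebra, so the permutation-invariance constraints form a triangular system that forces every even-weight stratum to contribute.
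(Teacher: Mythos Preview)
Your argument is correct and follows the same route as the paper: take the expectation value of a product of $m$ disjoint transpositions, expand via \eqref{swap}, observe that every Pauli term in the expansion is even-parity, and extract from $\langle U_{\pi_m}\rangle=1$ the constraint $\sum_{a_1\cdots a_t} r_{a_1a_1\cdots a_ta_t}^{i_1j_1\cdots i_tj_t}=1/2^n$ (equivalently your $t_m=1$) for each $t\le\lfloor n/2\rfloor$. The only presentational difference is that the paper phrases the extraction as an induction on $t$ starting from \eqref{casm1}, whereas you carry it out by an explicit binomial inversion and make the permutation-invariance collapse more explicit; substantively the two are the same.
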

\begin{proof}
Let us consider the expectation value of $m$ transpositions acting on disjoined supports ($m\le\floor{n/2}$):
\begin{eqnarray}\label{casm}
\langle \prod_{k=1}^m U_{i_kj_k}\rangle&=&\frac{1}{2^m}\bra{\psi}\prod_{k=1}^m\left(\sigma_{\grz}+\sum_{a_k=1}^3\sigma_{a_ka_k}^{i_kj_k}\right)\ket{\psi}\nonumber\\
 &=&\frac{1}{2^m}\left(1+\sum_{k=1}^m\sum_{a_k=1}^3\bra{\psi}\sigma_{a_ka_k}^{i_kj_k}\ket{\psi}\right.\nonumber\\
&+&\left.\sum_{k<l}\sum_{a_k,a_l}\bra{\psi}\sigma_{a_ka_ka_la_l}^{i_kj_ki_lj_l}\ket{\psi}+\cdots \right),
\end{eqnarray}
where, in this expression, there are $\binom{m}{t}$ terms of the form
\begin{equation}
\sum_{a_1a_2\cdots a_t}\bra{\psi}\sigma_{a_1a_1a_2a_2\cdots a_ta_t}^{i_1j_1i_2j_2\cdots i_tj_t}\ket{\psi},
\end{equation}
which correspond to all the combinations of $t$ transpositions chosen among the $m$ transpositions. By induction, starting from the case $t=1$ corresponding to Eq.~\eqref{casm1}, we conclude that all terms labeled by even indices should be equal to one, leading to the constraints
\begin{equation}\label{general-even}
\sum_{a_1a_2\cdots a_t}r_{a_1a_1a_2a_2\cdots a_ta_t}^{i_1j_1i_2j_2\cdots i_tj_t}=\frac{1}{2^n}.
\end{equation}
for $t=1,2,\ldots m$. Thus for symmetric states, {\em even} components $r_{\gra}$ of weight $2t$ cannot all vanish.
\end{proof}

Note that {\em a priori} the odd components are not constrained explicitly by the permutation symmetry. To see this, a reasoning similar to the above proof can be done by considering the expectation values of a 3-cycle acting on qubits $i_1$, $i_2$ and $i_3$. We have
\begin{eqnarray}\label{cascycle}
\langle U_{i_1i_2}U_{i_2i_3}\rangle&=&\frac
{1}{4}\bra{\psi}(\sigma_{\grz}+\sum_{a=1}^3\sigma_{aa}^{i_1i_2})(\sigma_{\grz}+\sum_{b=1}^3\sigma_{bb}^{i_2i_3})\ket{\psi}\nonumber\\
&=&\frac{1}{4}\left(1+\sum_{a=1}^3\bra{\psi}\sigma_{aa}^{i_1i_2}\ket{\psi}+\sum_{b=1}^3\bra{\psi}\sigma_{bb}^{i_2i_3}\ket{\psi}\right.\nonumber\\
&+&\left.\sum_{ab}\bra{\psi}\sigma_{aa}^{i_1i_2}\sigma_{bb}^{i_2i_3}\ket{\psi}\right)\nonumber\\
&=&\frac{1}{4}\left(1+\sum_{a=1}^3\bra{\psi}\sigma_{aa}^{i_1i_2}\ket{\psi}+\sum_{a=1}^3\bra{\psi}\sigma_{aa}^{i_2i_3}\ket{\psi}\right.\nonumber\\
&+&\sum_{a=b}\bra{\psi}\sigma_{aa}^{i_1i_3}\ket{\psi}+\left.\sum_{a\ne b,c}\epsilon_{abc}\bra{\psi}\sigma_{acb}^{i_1i_2i_3}\ket{\psi}\right).\nonumber\\
\end{eqnarray}
where $\epsilon_{abc}$ stands for the completely antisymmetric symbol.
Since we must have $\langle U_{i_1i_2}U_{i_2i_3}\rangle=1$ and since Eq.~\eqref{casm1} is satisfied, it follows
\begin{eqnarray}
\frac{1}{4}\left(4+\sum_{a\ne b,c}\epsilon_{abc}\bra{\psi}\sigma_{acb}^{i_1i_2i_3}\ket{\psi}\right)&=&1.\nonumber\\
\Longrightarrow\sum_{abc}\epsilon_{abc}r_{acb}^{i_1i_2i_3}=0.\label{odd}
\end{eqnarray}
Since $r_{acb}^{i_1i_2i_3}$ is completely symmetric in its lower indices for a symmetric state, Eq.~(\ref{odd}) is necessarily satisfied. Then, we can generalize the last procedure by averaging the product of disjoined transpositions and a 3-cycle (acting on a disjoined support). Proceeding by induction in analogy to the reasoning leading to Eq. \eqref{general-even}, we obtain again 
a set of relations 
\begin{equation}\label{general-odd}
\sum_{\substack{a_1a_2\cdots a_t\\b_1b_2b_3}}\epsilon_{b_1b_2b_3}r_{a_1a_1a_2a_2\cdots a_ta_t b_1 b_2 b_3}^{i_1j_1i_2j_2\cdots i_tj_t i_1 i_2 i_3}=0
\end{equation}
which are necessarily satisfied. Thus, in contrast with the {\em even} components which cannot all vanish in a symmetric state, the {\em odd} components are not constrained by permutation symmetry.

It must be stressed that the odd components may, however, be possibly constrained for another reason, in particular as a result of the purity constraint \eqref{pure}. It is nevertheless natural to seek for symmetric states such that a large number of their odd components would vanish, which would correspond to high multipartite entanglement. In order to test this possibility, we now investigate specific states defined in \cite{Martin2010,Aulbach2010,Markham2011}, known to exhibit high geometric entanglement.
%

\subsection{Symmetric states with high geometric entanglement}
In \cite{Martin2010,Aulbach2010,Markham2011}, some symmetric states with high geometric entanglement have been found. In particular, in \cite{Aulbach2010}, an optimization procedure was performed up to $n=12$ in order to find the states that maximize their geometric entanglement.  Those states are good candidate to test whether odd component $r_\gra$ indeed vanish. Of course, a direct calculation of these components by using expression \eqref{ralphapure} in the full $2^n$-dimensional space is not realistic if $n$ is not very small. Fortunately, by exploiting permutation symmetry, we obtain the following  simplifications enabling an efficient calculation.

The first simplification comes from counting the number of distinct components that we need to calculate for a symmetric state. It is easy to see that $U_{\pi} \rho U_{\pi}^{\dagger} = \rho$ implies $r_{\pi(\gra)}=r_{\gra}$ for any permutation $\pi\in S_n$. Thus, it is more convenient to label each component $r_\gra$ by a 4-component vector $\vl=[\lambda_0,\lambda_1,\lambda_2,\lambda_3]$ which enumerates the numbers of subindices 0,  1, 2, and 3 in the index $\gra$. For instance, we have $r_{(0113)}=r_{(1031)}=r_{[1,2,0,1]}$. For a given $\vl$, the number of equal components $r_{\gra}$ corresponding to the same $r_{\vl}$ is given by the multinomial coefficient $\binom{n}{\lambda_0 \lambda_1 \lambda_2 \lambda_3}=\binom{n}{\vl}$.  Thus, instead of having to calculate $4^n$ components, only $\left(\binom{4}{n}\right)=\binom{4+n-1}{n}=\frac{1}{6}(n+3)(n+2)(n+1)\simeq \frac{n^3}{6}$ distinct components are needed, corresponding to the number of distinct multinomial coefficients. Note that permutation symmetry also implies that $U_{\pi} \rho U_{\pi'}^{\dagger} = \rho$ for any permutations $\pi,\pi'\in S_n$, which leads to other constraints on the components $r_\gra$.

The second simplification comes from calculating \eqref{ralphapure} in the so-called {\em Dicke basis}, which is a natural basis of the $(n+1)$-dimensional symmetric subspace and allow us to use the  states as expressed in this basis in \cite{Aulbach2010}. Any symmetric state can be decomposed as
\begin{equation*}
\ket{\psi}=\sum_{k=0}^n d_k \ket{S_k^n} 
\end{equation*}
where $\ket{S_k^n}$ are the Dicke states, $d_k\in\mathbb{C}$ and $\sum_{k=0}^n |d_k|^2=1$.
The Dicke states are written in the computational basis as
\begin{eqnarray}
\ket{S_k^n}&=&\binom{n}{k}^{-1/2}\sum_{\pi\in S_n}U_{\pi}|\underbrace{00\cdots 0}_{n-k}\underbrace{11\cdots 1}_{k}\rangle \nonumber\\
&=&\binom{n}{k}^{-1/2} \sum_{|j|=k}\ket{\grj},\label{dicke}
\end{eqnarray}
where $\grj$ is a binary vector of size $n$ such that $|\grj|\equiv\sum_{i=1}^n j_i=k$. Symmetric states can also be defined in terms of the Majorana representation \cite{Majorana1932}. In this representation, every symmetric state $\ket{\psi}$ is characterized by a collection of $n$ one-qubit states $\ket{q_i}=x_i\ket{0}+y_i\ket{1}$ which can be viewed as $n$ points in the surface of the Bloch sphere, according to the expression
\begin{equation}
\ket{\psi}=\frac{e^{i\theta}}{\mathcal{N}}\sum_{\pi\in S_n}U_{\pi}\ket{q_1}\ket{q_2}\cdots\ket{q_n},
\end{equation}
for some phase $\theta$ and some normalization factor $\mathcal{N}$. We can move from the Majorana representation to the Dicke basis thanks to the relation \cite{Bastin2009}
\begin{equation*}
d_k=\binom{n}{k}^{-1/2}\sum_{\pi\in S_n}y_{\pi(1)}\cdots y_{\pi(k)} x_{\pi(k+1)}\cdots x_{\pi(n)},
\end{equation*}
but this is really inefficient as it involves a sum over all permutations. Instead, we use the fact, also noted in \cite{Bastin2009,Martin2010}, that the Majorana parameters $z_i=x_i/y_i$ are the roots of the polynomial
\begin{equation}\label{polynome}
P(z)=\sum_{k=0}^n(-1)^k\binom{n}{k}^{-1/2}d_k\,z^k .
\end{equation}
Thus, knowing the Majorana parameters $z_i$, the Dicke components $d_k$ can be calculated in about $\mathcal{O}(n^3)$ operations by solving the set of linear equations $\sum_k A_{ik}d_k=0$, with $A$ being a matrix of entries $A_{ik}=(-1)^k\binom{n}{k}^{-1/2}z_i^k$.

The last simplification concerns the expression of the generalized Pauli matrices themselves. Indeed, in order to compute the components $r_{\vl}$ of a symmetric state of known components $d_k$ in the Dicke basis, we only need the {\em symmetric} part of the generalized Pauli matrices $\sigma_{\gra}$, that is, their projection into the symmetric subspace. These symmetric matrices are $(n+1)\times(n+1)$ matrices in the Dicke basis, which we note as $\tau_{\vl}$. Just as for the components $r_{\gra}$ of a symmetric state, there are $\binom{n}{\vl}$ generalized Pauli matrices $\sigma_{\gra}$ which are all projected onto the same symmetric matrix $\tau_{\vl}$, labeled by the index $\vl$. The matrix elements of $\tau_{\vl}$ in the Dicke basis are 
\begin{eqnarray}\label{tau}
(\tau_{\vl})_{kk'}&=&\bra{S_k^n}\sigma_{\gra}\ket{S_{k'}^n}\nonumber\\
&=&  \binom{n}{k}^{-1/2}  \binom{n}{k'}^{-1/2}  
\sum_{|\grj|=k} \, \sum_{|\grj'|=k'}\bra{\grj}\sigma_{\gra}\ket{\grj'}\nonumber\\
&=& \sum_{\substack{|\grj|=k \\|\grj'|=k}}\frac{\bra{\grj}\sigma_{0}^{\otimes \lambda_0}\otimes\sigma_{1}^{\otimes \lambda_1}\otimes\sigma_{2}^{\otimes \lambda_2}\otimes\sigma_{3}^{\otimes \lambda_3}\ket{\grj'}}{\sqrt{\binom{n}{k}\binom{n}{k'}}},\nonumber
\end{eqnarray}
where we have taken an arbitrary order for the individual Pauli matrices in $\sigma_{\gra}$. Then, the vectors $\grj$ and $\grj'$ can be cut in four pieces $\grj_\alpha$ and $\grj'_\alpha$ of size $\lambda_\alpha$ ($\alpha=[0,3]$). To simplify the notation, we note as $\mathcal{J}$ the domain satisfying the set of constraints on vectors $\grj_\alpha$ and $\grj'_\alpha$. It follows
\begin{eqnarray}
(\tau_{\vl})_{kk'}&=&\sum_{\mathcal{J}}\frac{\bra{\grj_0}\sigma_{0}^{\otimes \lambda_0}\ket{\grj'_0}\bra{\grj_1}\sigma_{1}^{\otimes \lambda_1}\ket{\grj'_1}\cdots\bra{\grj_3}\sigma_{3}^{\otimes \lambda_3}\ket{\grj'_3}}{\sqrt{\binom{n}{k}\binom{n}{k'}}}.\nonumber
\end{eqnarray}
By using the definition of the usual $2\times 2$ Pauli matrices, we can write each factor as
\begin{equation}
\left\{
\begin{array}{l}
\bra{\grj_0}\sigma_{0}^{\otimes \lambda_0}\ket{\grj'_0}=\delta_{\grj_0\,\grj'_0},\\
\bra{\grj_1}\sigma_{1}^{\otimes \lambda_1}\ket{\grj'_1}=\delta_{\grj_1\,\overline{\grj'_1}},\\
\bra{\grj_2}\sigma_{2}^{\otimes \lambda_2}\ket{\grj'_2}=(-1)^{|\grj'_2|}(i)^{\lambda_2}\delta_{\grj_2\,\overline{\grj'_2}},\\
\bra{\grj_3}\sigma_{3}^{\otimes \lambda_3}\ket{\grj'_3}=(-1)^{|\grj'_3|}\delta_{\grj_3\,\grj'_3},\\
\end{array}
\right .
\end{equation}
where {\em barred} vectors stand for the complementary vectors, for instance $\overline{(1011)}=(0100)$. These equations can be rewritten in terms of the new indices $k_\alpha=|\grj_\alpha|$ and $k'_\alpha=|\grj'_\alpha|$ as
\begin{equation}
\left\{
\begin{array}{l}
\bra{k_0}\sigma_{0}^{\otimes \lambda_0}\ket{k'_0}=\delta_{k_0\,k'_0},\\
\bra{k_1}\sigma_{1}^{\otimes \lambda_1}\ket{k'_1}=\delta_{k_1\,(\lambda_1-k'_1)},\\
\bra{k_2}\sigma_{2}^{\otimes \lambda_2}\ket{k'_2}=(-1)^{k'_2}(i)^{\lambda_2}\delta_{k_2\,(\lambda_2-k'_2)},\\
\bra{k_3}\sigma_{3}^{\otimes \lambda_3}\ket{k'_3}=(-1)^{k'_3}\delta_{k_3\,k'_3},
\end{array}
\right .
\end{equation}
and the sum over each $\grj_\alpha$ can be replaced by a sum over each $k_\alpha$ weighted by a factor $\binom{\lambda_\alpha}{k_\alpha}$. Eventually, the matrix elements $\tau_{\vl}$ can be reexpressed as
\begin{equation}\label{tau2}
(\tau_{\vl})_{kk'}=\sum\frac{i^{(2k_3+3\lambda_2 - 2k_2)}\binom{\lambda_0}{k_0}\binom{\lambda_1}{k_1}\binom{\lambda_2}{k_2}\binom{\lambda_3}{k_3}}{\sqrt{\binom{n}{k}\binom{n}{k'}}},
\end{equation}
where the sum is taken over the four indices $k_0$, $k_1$, $k_2$ and $k_3$ which can take values between 0 and $\lambda_0$, $\lambda_1$, $\lambda_2$ and $\lambda_3$, respectively, with the two constraints
\begin{equation}\label{contraintes}
\left\{
\begin{array}{lcl}
k &=& k_0 + k_1 + k_2 + k_3,\\
k'&=& k_0 + (\lambda_1 - k_1) + (\lambda_2  - k_2) + k_3.
\end{array}
\right .
\end{equation}
Note that in the worst case where each $\lambda_\alpha\simeq \floor{n/4}$, this calculation implies to calculate about $\mathcal{O}(n^4)$ terms.
As an example for two qubits ($n=2$), the symmetric part of the generalized Pauli matrices $\sigma_{(13)}$ and $\sigma_{(31)}$ corresponds to 
\begin{equation}\label{tau0101}
\tau_{[0,1,0,1]}=\left(
\begin{array}{ccc}
 0 & \frac{1}{\sqrt{2}} & 0 \\
 \frac{1}{\sqrt{2}} & 0 & -\frac{1}{\sqrt{2}} \\
 0 & -\frac{1}{\sqrt{2}} & 0
\end{array}
\right).
\end{equation}

By using explicit expressions of symmetric states with maximum geometric entanglement taken from \cite{Aulbach2010} or  3D coordinates available in the Sloane database \cite{Sloane:codetables}, these simplifications allow us to calculate efficiently the distinct components in the Dicke basis as
\begin{equation*}
r_{\vl}=\frac{1}{2^n}\bra{\psi}\sigma_{\vl}\ket{\psi}.
\end{equation*}
For $n=4$-$12$ (and also for $n=20$), the ratios between the number of zero odd components and the total number of odd components are gathered in Table \ref{table1}.
\begin{table}[h]
\begin{tabular}{|c|l|c|}
  \hline
  n & State & Zero odd/Total odd\\
  \hline
  4 & $\ket{\psi_4}=\frac{1}{\sqrt{3}}\ket{S_0^4}+\sqrt{\frac{2}{3}}\ket{S_3^4}$ & 18/25=72\% \\
  5 & $\ket{\psi_5}\simeq 0.547\ket{S_0^5}+0.837\ket{S_4^5}$ & 36/46$\simeq$ 78 \% \\
  6 & $\ket{\psi_6}=(\ket{S_1^6}+\ket{S_5^6})\sqrt{2}$ & 64/64=100 \% \\
  7 & $\ket{\psi_7}=(\ket{S_1^7}+\ket{S_6^7})\sqrt{2}$ & 90/100=90 \% \\
  8 & $\ket{\psi_8}\simeq 0.672\ket{S_1^8}+0.741\ket{S_6^8}$ & 94/130$\simeq$ 72\% \\
  9 & $\ket{\psi_9}=(\ket{S_2^9}+\ket{S_7^9})\sqrt{2}$ & 164/185$\simeq$ 89\%\\
  10 & $\ket{\psi_{10}}=(\ket{S_2^{10}}+\ket{S_8^{10}})\sqrt{2}$ & 230/230=100\% \\
  12 & $\ket{\psi_{12}}=(\ket{S_2^{10}}+\ket{S_8^{10}})\sqrt{2}$ & 341/371$\simeq$ 94\% \\
  20 & Dodecahedron state from \cite{Sloane:codetables} & 1266/1484 $\simeq$ 85\% \\
  \hline
\end{tabular}
\caption{\label{table1}Proportion of vanishing odd components in symmetric states that maximize geometric entanglement (from \cite{Aulbach2010}). Note that $\ket{\psi_{10}}$ is not the maximum but only a state really close to it which allows an explicit writing.}
\end{table}
We observe that a large proportion of odd components vanish for these states. In the special cases $n=6$ and $n=10$, really all odd components vanish. For the case $n=4$ and $n=12$, the states constructed thanks to the 3D coordinates available in \cite{Sloane:codetables} give better results, namely the ratio $24/25\simeq 96\%$ for $n=4$ and $371/371=100\%$ for $n=12$, even though these states are equivalent to those of \cite{Aulbach2010} (\ie related by symmetric unitary transformation $U^{\otimes n}$).
This is the case because the proportion of vanishing odd components is {\em basis-dependent}, in the sense that two equivalent symmetric states will have in general a different structure in the vector $r_{\vl}$ even if they have the same entanglement content.

Finally, we observe that the structure of the vector $r_{\vl}$ often takes a particularly simple form, especially for these states with a large proportion of vanishing odd components. For $n=4$ and $n=6$, all the non-zero components are given by
\begin{eqnarray}
n=4\Longrightarrow
\begin{pmatrix}
r_{\pi[4,0,0,0]}\\
r_{\pi[2,2,0,0]}\\
r_{\pi[1,1,1,1]}\\
\end{pmatrix}=\frac{1}{2^n}
\begin{pmatrix}
1\\
\pm 1/3\\
1/\sqrt{3}\\
\end{pmatrix},\label{r4}\\
n=6\Longrightarrow
\begin{pmatrix}
r_{\pi[6,0,0,0]}\\
r_{\pi[4,2,0,0]}\\
r_{\pi[2,2,2,0]}\\
\end{pmatrix}=\frac{1}{2^n}
\begin{pmatrix}
1\\
\pm 1/3\\
\pm 1/3\\
\end{pmatrix}.\label{r6}
\end{eqnarray}
In these expressions, the permutation symbols $\pi$ applied on vectors $\vl$ indicate that components with the same vector index $\vl$ up to some permutation are equal (or of opposite sign). For instance, for the 4-qubit state, it means that $r_{(0011)}=r_{(0022)}=-r_{(2211)}$. Note that the presence of the minus sign can be see as a consequence of the purity constraint \eqref{pure}.
For $n=10$ and $n=12$, the structure is similar even if a little bit more complex because it involves one or two different values per permutation of the $\vl$ index. In particular, for $n=12$, the structure is similar to that of $n=4$ or 6 in the sense that we have the non-zero components:
\begin{eqnarray}n=12\Longrightarrow
\begin{pmatrix}
r_{\pi[12,0,0,0]}\\
r_{\pi[10,2,0,0]}\\
r_{\pi[8,4,0,0]}\\
r_{\pi[8,2,2,0]}\\
\end{pmatrix}=\frac{1}{2^n}
\begin{pmatrix}
1\\
\pm 1/3\\
2/10\\
\pm 1/15\\
\end{pmatrix}.\label{r12}
\end{eqnarray}
We believe that a possible new approach to analyzing maximum entangled states in the symmetric subspace for higher values of $n$ should be inspired by these nice structures, and take the vector $r_{\vl}$ as a starting point.


\section{Conclusion}
We have investigated the maximally mixed reduction property through the concept of $k$-MM states. By making use of the generalized Bloch representation in which the partial trace operation takes a simple form, we expressed the condition that a $k$-MM state must satisfy in terms of its generalized Bloch vector components. Considering the class of balanced $k$-MM states and a connection with quantum error-correcting codes, we found asymptotic lower and upper bounds on the reduction size $k$ for $n\to \infty$. Then, we analyzed the class of symmetric states, which led us to consider a weaker version of the maximally mixed reduction property. We showed that symmetric states cannot have maximally-mixed $k$-qubit reductions with $k>1$, which is linked to the fact that some weight-two component of their generalized Bloch vector must necessarily be non-zero. In other words, symmetric states do not obey the maximally mixed reduction property (they cannot be $k$-MM states with $k$ growing linearly in $n$). However, we showed that the constraint of admitting non-zero components only holds for even components, so odd components are not constrained by permutation symmetry. We studied the case of symmetric states which maximize geometric entanglement (up to $n=20$) as examples of states admitting many zero odd components, witnessing a high multipartite entanglement content even though they do not obey the maximally mixed reduction property for $k>1$. This led us to observe some interesting structures in the Bloch vector of states maximizing the geometric entanglement, which may open new perspectives in the analysis of multipartite entanglement. 

{\em Acknowledgments:} This work was carried out with the financial support of the F.R.S-FNRS.


\end{document}